\documentclass[onecolumn]{IEEEtran}
\usepackage{amsmath, amssymb, amsthm, latexsym,hyperref}

\topmargin=-30pt
\textheight=648pt
\oddsidemargin=0pt
\textwidth=468pt

\def\0{{\mathbf 0}}
\def\1{{\mathbf 1}}

\def\FF{{\mathbb F}}

\def\Z{{\mathbb Z}}

\def\Zp8{{\Z_{p^\infty}}}

\newtheorem{thm}{Theorem}[section]
\newtheorem{prop}[thm]{Proposition}
\newtheorem{obs}[thm]{Observation}
\newtheorem{lem}[thm]{Lemma}
\newtheorem{cor}[thm]{Corollary}
\theoremstyle{definition}

\newtheorem{rem}[thm]{Remark}




\date{\today}

\begin{document}

\newcommand{\comment}[1]{} 

\title{The Classification of Complementary Information Set Codes of Lengths $14$ and $16$}

\author{
Finley Freibert
}
\maketitle

\begin{abstract}

In the paper
``A new class of codes for Boolean masking of cryptographic computations
,'' Carlet, Gaborit, Kim, and Sol\'{e} defined a new class of rate one-half binary codes called \emph{complementary information set} (or CIS) codes.  The authors then classified all CIS codes of length less than or equal to 12.  CIS codes have relations to classical Coding Theory as they are a generalization of self-dual codes. As stated in the paper, CIS codes also have important practical applications as they may improve the cost of masking cryptographic algorithms against side channel attacks.  In this paper, we give a complete classification result for length 14 CIS codes using an equivalence relation on $GL(n,\FF_2)$.  We also give a new classification for all binary $[16,8,3]$ and $[16,8,4]$ codes.  We then complete the classification for length 16 CIS codes and give additional classifications for optimal CIS codes of lengths 20 and 26.

\end{abstract}

\begin{IEEEkeywords} CIS codes, classification, computerized search, formally self-dual codes, graph isomorphism
\end{IEEEkeywords}




\section{Motivations}

A generalization of self-dual codes was recently proposed by Carlet, Gaborit, Kim, and Sol\'{e} in~\cite{CIS}.  In the paper, a new class of codes, called \emph{complementary information set} (or \emph{CIS}) codes, is defined.  Given an integer $n$, a binary linear code with parameters $[2n,n,d]$ which has two disjoint information sets is a \emph{complementary information set} code.  CIS codes have a variety of connections and applications; the authors (in~\cite{CIS}) note the direct applications found in Cryptography, with relations to Boolean S-Boxes, Boolean functions, and masking~\cite{CIS_2,CIS_3,CIS_1,CIS_4}.  

Previous results on the classification of rate one-half codes date back to Pless' enumeration of self-dual codes in 1972~\cite{Ple_72}.  Since that time many classification results for self-dual codes have been obtained, most recently the doubly-even self-dual codes of length 40~\cite{Har2011}.
A related problem is the classification of formally self-dual codes.  The main results in this direction are classifications of optimal formally self-dual codes; results have been given in~\cite{BetHara, BetHara2, FieGabHuffPless, Han_FSD}.  Betsumiya and Harada gave a complete classification for even formally self-dual codes up to length 16~\cite{BetHara2}; Han, H. Lee, and Y. Lee gave a complete classification for odd formally self-dual codes up to length 14~\cite{Han_FSD}.
Some general results on optimal rate one-half codes were obtained by Gulliver and \"{O}stergard in~\cite{GulOst}.
In the paper~\cite{CIS}, CIS codes are classified for $2n=2,4,6,8,10,12$.  In the proceeding sections we obtain new results in the direction of these previous researchers.

Notations are introduced in Section~\ref{nota}.  Graph isomorphism tools used for the classifications are described in Sections~\ref{classtool}-\ref{GLGraph}.  In Section~\ref{CIS14}, we classify all $[14,7]$ CIS codes (i.e. the case where $n=7$).  In Section~\ref{CIS16_4}, we give new results on the classification of $[16,8,3]$ and $[16,8,4]$ codes which yields a classification of both $[16,8,3]$ and $[16,8,4]$ CIS codes and $[16,8,3]$ and $[16,8,4]$ odd formally self-dual codes.  In this section we develop an algorithm to decide whether a code is CIS.  In Section~\ref{restCIS2}, some theoretical restrictions on CIS codes with minimum weight 2 are stated.  This allows for the complete classification of length 16 CIS codes.  In Section~\ref{CISoptsection}, we give an up-to-date classification of optimal CIS codes. All computations were completed using MAGMA~\cite{mag}; some computations were run in parallel and then the output data was compiled at the end.  Generator matrices for length 14 CIS codes will be posted to the author's website~\cite{results}.

\section{Notations}\label{nota}
The main notations and basic definitions concerning linear codes are adapted from~\cite{HuffPless}.  Let  $\FF_2$ denote the binary field.  Any subspace $C$ of the vector space  $\FF_2^n$ is called a \emph{linear $[n,k]$ code}.  All codes we refer to are both binary and linear.  
The \emph{Hamming weight} of a vector  ${\bf x} \in \FF_2^n$, denoted $wt({\bf x})$ is the number of nonzero coordinates of ${\bf x}$.  The \emph{Hamming distance} between two vectors ${\bf x},{\bf y} \in \FF_2^n$, denoted $d({\bf x},{\bf y})$ is the number of coordinates in ${\bf x}$ and ${\bf y}$ which are different.  The \emph{minimum weight} of a code $C$ is an integer $d$, where the minimum is taken among all non-zero weights of $C$, $C$ is then referred to as an $[n,k,d]$ code.
Two binary codes are said to be \emph{equivalent} if there exists a permutation of coordinates mapping one code onto the other code.
 
Given two vectors in $\FF_2^n$, $v=a_1 a_2 \dots a_n$ and $u=b_1 b_2 \dots b_n$, the \emph{Euclidean inner product} of ${\bf u}$ with ${\bf v}$ is the sum ${\bf u} \cdot {\bf v} := a_1 b_1 + a_2 b_2 \cdots + a_n b_n$.   The \emph{dual of $C$} is the set $C^\perp := \{ {\bf x} \in \FF_2^n : {\bf x} \cdot {\bf v} = 0$ for all ${\bf v} \in C \}$.  The number of codewords of weight $w$ in a code $C$ (resp. $C^\perp$) is denoted by $A_w$ (resp. $A^\perp_w$).  $C$ is called \emph{self-dual} if $C=C^\perp$.  $C$ is called \emph{formally self-dual} if $A_w = A^\perp_w$ for all weights $w$.

A $k$ by $n$ matrix $G$ is called a \emph{generator matrix} for an $[n,k]$ code $C$ if the rows of $G$ form a basis for $C$.  Any set of $k$ columns of $G$ which are linearly independent is called an \emph{information set} for $C$.  As stated above, the authors in~\cite{CIS} define a \emph{complementary information set} code to be a binary linear code with parameters $[2n,n,d]$ which has two disjoint information sets.

\section{A Classification Tool Using Graph Isomorphism}\label{classtool}

The classification of binary $[n,k,d]$ codes satisfying various properties is a classical problem; as mentioned above, previous work in this direction includes the classification of self-dual codes, formally self-dual codes, and rate one-half codes in general.  Thus, an interesting problem in the area of CIS codes is the classification problem.  One main difficulty that arises when classifying codes is the equivalence test.  When comparing a small set of codes the equivalence test can be implemented easily (in MAGMA~\cite{mag}) by performing a pairwise comparison of all codes in the set.  However, when comparing more than a few thousand codes the test becomes rather time consuming.  In essence this is a combinatorial problem of classifying objects up to a defined equivalence.  A useful solution for this problem, proposed independently in 1978 by~\cite{OrdGen1,OrdGen2}, is to generate a list of inequivalent combinatorial objects (codes) by producing a ``canonical representative'' for each equivalence class.  This method is described by Kaski and  \"{O}stergard and it is called \emph{Orderly Generation} (\cite{KasOst} pp.120-124).  There is no equivalence test in this method, the only criterion is set membership.

The difficulty in applying the \emph{Orderly Generation} method is finding a way to determine a ``canonical representative'' for each equivalence class. As suggested in~\cite{KasOst}, a clever navigation of this difficulty is to make use of Brendan McKay's graph isomorphism program \emph{nauty}~\cite{McKay}.  Two graphs $G$ and $G'$ with vertex sets $V$ and $V'$ are said to be \emph{isomorphic} if there exists a bijection $\phi: V \rightarrow V'$ such that $(u,v)$ is an adjacent pair of vertices in $G$ if and only if $(\phi(u),\phi(v))$ is an adjacent pair of vertices in $G'$.  Given a graph $G$ with vertex set $V$ and a fixed labeling on the vertices with the integers ${1,2,...,|V|}$, \emph{nauty} can output a ``canonical'' labeling among all isomorphic graphs.  In fact, if the graph is a colored graph, then \emph{nauty} will give a canonical labeling which preserves the color among labels.  In~\cite{Ost},  \"{O}stergard uses \emph{nauty} functionality to classify binary linear codes of minimum distance greater than two for up to length 14.  In~\cite{Schaathun}, Schaathun implements a search which classifies all $[36,8,16]$ linear codes using \emph{nauty}.

\section{A Correspondence Between Codes and Graphs}\label{CodeGraph}

Now we must describe how to transform a linear code to a colored graph.  As per the formulations in~\cite{KasOst,Ost,Schaathun}, let a linear $[n,k,d]$ code $C$ be given.  Let $S$ be the set of minimum weight in $C$.  If $S$ does not generate $C$, then include all codewords in $C$ of weight 1 higher than the maximum weight in $S$.  Repeat the last step until $S$ generates $C$.  Fix an ordering on $S$ so that $c_i$ represents a specific element of $S$ for $i \in \{1,...,|S|\}$.  Construct a set of $|S|+n$ vertices labeled with the integers $1,2,...,|S|+n$ (denote $v_i$ the vertex with label $i$).  Construct a bipartite graph in the following way.  Let $\{v_1, v_2, ..., v_{|S|} \}$ be one partite set, and let the other partite set be $\{v_{|S|+1}, v_{|S|+2}, ..., v_{|S|+n} \}$.  Draw an edge $(v_i, v_{|S|+j}$ if and only if $c_i$ has a $1$ in coordinate $j$.  Color vertices $\{v_1, v_2, ..., v_{|S|} \}$ black.  Color vertices $\{v_{|S|+1}, v_{|S|+2}, ..., v_{|S|+n} \}$ red.
The following lemma is adapted from the known methods described in~\cite{KasOst,Ost,Schaathun}.
\begin{lem}
A permutation $\alpha _1$ of the labels on the black vertices corresponds to a permutation of the ordering on the codewords.  A permutation $\alpha _2$ of the labels of the red vertices corresponds to a permutation of columns of codewords.  As a result, applying $\alpha _1$ and $\alpha _2$ to a graph $G$ (constructed from a code $C'$), yields a graph $G'$ (corresponding to a code $C'$ equivalent to $C$).
\end{lem}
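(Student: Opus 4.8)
The plan is to treat the construction of Section~\ref{CodeGraph} as a reversible dictionary between an ordered generating set of a code and a colored bipartite graph, and then to track how the two kinds of label permutations act through that dictionary.

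First I would make the dictionary explicit: in the graph $G$ built from $C$, the neighbors of a black vertex $v_i$ are exactly the red vertices $v_{|S|+j}$ with $j\in\supp(c_i)$. Identifying the red vertex $v_{|S|+j}$ with coordinate $j$, the neighborhood of $v_i$ is precisely $\supp(c_i)$, so reading off all black-vertex neighborhoods recovers the ordered list $c_1,\dots,c_{|S|}$, hence the set $S$, hence the code $C=\gen{S}$.

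Next I would analyze the two permutations separately. A permutation $\alpha_1$ of the black labels leaves each black vertex's neighborhood intact but changes the index attached to it, so it merely reorders $c_1,\dots,c_{|S|}$ into $c_{\pi(1)},\dots,c_{\pi(|S|)}$ for some $\pi$; since $\gen{S}$ depends only on $S$ as a set, the associated code is unchanged, which is the first assertion. A permutation $\alpha_2$ of the red labels is, under the identification above, a permutation $\sigma$ of the $n$ coordinates; it sends $\supp(c_i)$ to $\sigma(\supp(c_i))=\supp(\sigma(c_i))$, where $\sigma$ acts on $\FF_2^n$ by permuting coordinates. Hence the codewords read off the permuted graph are the vectors $\sigma(c_i)$, which generate the coordinate-permuted code $\sigma(C)$, and $\sigma(C)$ is equivalent to $C$ by definition; this is the second assertion.

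Combining the two, applying $\alpha_1$ and $\alpha_2$ to $G$ produces a graph $G'$ whose black-vertex neighborhoods are the sets $\supp(\sigma(c_{\pi(i)}))$, so reading $G'$ back yields the code $C':=\sigma(C)$, equivalent to $C$. The one step I expect to carry the real content is checking that $G'$ is literally a graph of the form produced by the Section~\ref{CodeGraph} construction applied to $C'$ --- equivalently, that the selection rule ``take all minimum weight codewords, and keep adjoining all codewords of the next larger weight until the chosen set generates'' commutes with coordinate permutations. This holds because a coordinate permutation preserves Hamming weight: $\sigma$ carries the minimum weight set of $C$ bijectively onto that of $C'$ and commutes with each adjoining step, so the generating set the construction picks for $C'$ is exactly $\sigma(S)$, matching the neighborhoods in $G'$ up to the harmless reordering induced by $\alpha_1$. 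The residual verifications (that the two permutations respect the coloring, and that the iterative generation procedure for $C$ and $C'$ terminates after the same number of steps with corresponding sets) are routine, and I would dispatch them briefly. Hence $G'$ is a valid graph for the equivalent code $C'$, proving the lemma.
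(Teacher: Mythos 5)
Your proof is correct and follows essentially the same route as the paper: identify black-vertex neighborhoods with codeword supports, observe that $\alpha_1$ reorders the generators and $\alpha_2$ permutes coordinates, and conclude the resulting graph corresponds to a permutation-equivalent code. Your extra check that the weight-based selection of $S$ commutes with coordinate permutations is a welcome bit of care that the paper's terse proof leaves implicit, but it does not change the argument's structure.
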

\begin{proof}
The first claim is clear from the construction since $c_i$ corresponds to vertex $v_i$.  The second claim follows from the fact that if $\alpha _2 (v_{|S|+i})=v_{|S|+j}$, then all codewords which had a 1 in column $i$, now have a 1 in column $j$ after applying $\alpha _2$.

Since $\alpha _1$ and $\alpha _2$ correspond to permuting generators and columns in the code $C$ to obtain $G'$, then $G'$ must correspond to a code $C'$ equivalent to $C$.
\end{proof}

Because of the functionality in \emph{nauty}, a canonically labeled graph (with the color restriction described above) corresponds to a canonical form of a linear $[n,k,d]$ code.  Therefore we may apply the \emph{Orderly Generation} method.  

\section{A Correspondence Between $GL(n,\FF_2)$ and Graphs}\label{GLGraph}
The \emph{general linear group} of degree $n$, denoted $GL(n,\FF_q)$, is the set of all $n$ by $n$ invertible matrices, over $\FF_q$, under matrix multiplication.
Given any linear $[2n,n,d]$ code $C$, it is clear that if the coordinate set $\{1,2,...,n\}$ forms an information set, then any generator matrix of $C$ has the form $G=[I | A]$, after performing Gaussian Elimination, where $I$ is the $n$ by $n$ identity matrix and $A$ is an $n$ by $n$ matrix.  In~\cite{CIS} this is called the \emph{systematic form} of the generator matrix for a $[2n,n,d]$ code $C$. $C$ is CIS if and only if $C$ may be converted to \emph{systematic form} where $A \in GL(n,\FF_2)$, by Lemma IV.1 of~\cite{CIS}.  Hence if the equivalence classes of $GL(n,\FF_2)$ are classified, then the classification of CIS codes can be obtained using the ideas of Section~\ref{CodeGraph}.  Therefore an interesting related classification problem is to find all equivalence classes of $GL(n,\FF_2)$ (under row and column permutations).

Consider the following notion of equivalence on $GL(n,\FF_2)$: two matrices $A,B \in GL(n,\FF_2)$ are equivalent, $A\sim B$, if and only if $A=P_1 B P_2$ if $P_1$ and $P_2$ are two $n$ by $n$ permutation matrices.
\begin{obs}
$\sim$ is an equivalence relation on $GL(n,\FF_2)$.
\end{obs}
\begin{proof}
Let $A,B,C \in GL(n,\FF_2)$.  The notation $P_i$ for an integer $i$ denotes a permutation matrix.
Reflexivity is clear since if $P_e$ is the identity permutation, then $A=P_e A P_e$.
Symmetry holds since the inverse of a permutation matrix is a permutation matrix: $A=P_1 B P_2$ implies $B= P_1^{-1} A P_2^{-1}$.
Transitivity holds since the product of two permutation matrices is a permutation matrix: $A=P_1 B P_2$ and $B=P_3 C P_4$ implies $A=(P_1 P_3) C (P_4 P_2)$.
\end{proof}

\begin{obs}\label{obs:gl}
If $A,B \in GL(n,\FF_2)$ are such that $A \sim B$, then the CIS codes with systematic generator matrices $[I|A]$ and $[I|B]$ are equivalent.
\end{obs}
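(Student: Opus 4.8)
The plan is to produce an explicit coordinate permutation carrying one code onto the other, so that no equivalence test is needed. First I would write both codes concretely. Since $[I|A]$ is in systematic form, the code $C_A$ it generates is exactly $\{(\x,\x A):\x\in\FF_2^n\}$, where codewords are row vectors split into a left block of $n$ coordinates and a right block of $n$ coordinates; likewise $C_B=\{(\y,\y B):\y\in\FF_2^n\}$. Both are CIS by hypothesis (indeed, the left $n$ columns form an information set because of the $I$ block, and the right $n$ columns form one because $A,B\in GL(n,\FF_2)$), so the content is only the equivalence claim.

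Next, using the hypothesis $A=P_1 B P_2$ with $P_1,P_2$ permutation matrices, I would consider the permutation of the $2n$ coordinates that acts as $P_1$ on the left block and as $P_2^{-1}$ on the right block, i.e.\ the block-diagonal $2n\times 2n$ permutation matrix $\operatorname{diag}(P_1,P_2^{-1})$. Applying it to a generic codeword $(\x,\x A)$ of $C_A$ gives $(\x P_1,\x A P_2^{-1})$. Setting $\y=\x P_1$ and using $\x A P_2^{-1}=\y P_1^{-1} A P_2^{-1}=\y B$ — the last equality being exactly the hypothesis rearranged — this becomes $(\y,\y B)$, a generic codeword of $C_B$. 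Since $\x\mapsto\x P_1$ is a bijection of $\FF_2^n$, this coordinate permutation maps $C_A$ bijectively onto $C_B$, so the two codes are equivalent, which is what we want.

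The only delicate point — and it is bookkeeping, not a real obstacle — is pinning down the left/right conventions: whether coordinate permutations act on the right of row vectors, and hence whether the right block must be permuted by $P_2$ or by $P_2^{-1}$. Choosing the wrong one yields the relation $A=P_1 B P_2^{-1}$ rather than the defining relation $A\sim B$, so I would fix the convention at the outset and check that the resulting matrix identity matches the definition of $\sim$ given before Observation~\ref{obs:gl}. Everything else is immediate from the concrete description of the two codes.
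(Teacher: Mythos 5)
Your proposal is correct and takes essentially the same route as the paper: the paper realizes exactly the same coordinate permutation (a permutation of the left block by a power of $P_1$ and of the right block by a power of $P_2$, together with an inconsequential row permutation) by manipulating the generator matrix, rewriting $[I|B]$ via $P_1[I|BP_2]=[P_1|A]$ and then restoring the identity block, whereas you verify that permutation directly on the codewords $(\x,\x A)$. The left/right convention issue you flag is handled correctly in your computation, so there is no gap.
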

\begin{proof}
By definition, $A=P_1 B P_2$ where $P_1$ and $P_2$ are two $n$ by $n$ permutation matrices.  Let $B'=BP_2$.  The code generated by $P_1[I|B']$ is equivalent to the code generated by $[I|B]$ by permuting the columns of B and the rows of the entire generator matrix.  However, $P_1[I|B']=[P_1|(P_1BP_2)]=[P_1|A]$ since $P_1 I = P_1$.  Finally to obtain $[I|A]$ from $[P_1|A]$ simply permute the columns of $P_1$ by applying $P_1^{-1}$.
\end{proof}

\begin{rem}
The converse of Observation~\ref{obs:gl} is not true in general.  The smallest counterexample is for $n=3$.  Consider the following matrices from $GL(3,\FF_2)$ 
\[
A = \left[ \begin{array}{c}
111 \\
011 \\
001
\end{array}
\right],
B = \left[ \begin{array}{c}
110 \\
011 \\
001
\end{array}
\right].
\]
Note that $A$ is not equivalent to $B$ since $A$ has a row of weight 3 and $B$ does not.  However, it is clear to see that the code generated by $[I|A]$ is equivalent to the code generated by $[I|B]$ by permuting the second and fifth columns and performing row elimination.
\end{rem}

To use \emph{nauty}, we now describe how to transform an element of $GL(n,\FF_2)$ to a colored bipartite graph.  Similar to the method of Section~\ref{CodeGraph}, let $A \in GL(n,\FF_2)$.  Construct a set of $2n$ vertices labeled with the integers $1,2,...,2n$ (denote $v_i$ the vertex with label $i$).  Construct a bipartite graph in the following way.  Let $\{v_1, v_2, ..., v_{n} \}$ be one partite set, and let the other partite set be $\{v_{n+1}, v_{n+2}, ..., v_{2n} \}$.  Draw an edge $(v_i, v_{n+j}$ if and only if row $i$ has a $1$ in column $j$.  Color vertices $\{v_1, v_2, ..., v_{n} \}$ black.  Color vertices $\{v_{n+1}, v_{n+2}, ..., v_{2n} \}$ red.
The following lemma is adapted from the known combinatorial formulations in~\cite{KasOst}.
\begin{lem}
A permutation $\alpha _{row}$ (resp. $\alpha _{col}$) of the labels on the black (resp. red) vertices corresponds to a permutation of rows (resp. columns).  As a result, applying $\alpha _{row}$ and $\alpha _{col}$ to a graph $G$ (constructed from $A \in GL(n,\FF_2)$), yields a graph $G'$ (corresponding to an equivalent matrix $A' \in GL(n,\FF_2)$).
\end{lem}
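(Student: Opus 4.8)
The plan is to follow the same two-step structure used in the proof of the code-to-graph lemma of Section~\ref{CodeGraph}, now adapted to the bipartite graph attached to a matrix $A \in GL(n,\FF_2)$. First I would verify the two ``correspondence'' claims separately, and then combine them to conclude that the relabeled graph $G'$ comes from an equivalent (in particular still invertible) matrix $A'$.

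For the first claim, I would observe that by construction the black vertex $v_i$ is joined to exactly those red vertices $v_{n+j}$ for which row $i$ of $A$ has a $1$ in column $j$; that is, the neighborhood of $v_i$ records the support of row $i$. Hence relabeling the black vertices by $\alpha_{row}$ moves the adjacency pattern of $v_i$ to the vertex $\alpha_{row}(v_i)$, which is precisely the effect on $A$ of permuting its rows by the corresponding permutation. For the second claim I would argue symmetrically: if $\alpha_{col}(v_{n+i}) = v_{n+j}$, then every edge previously incident to $v_{n+i}$ becomes incident to $v_{n+j}$, so every row that had a $1$ in column $i$ now has a $1$ in column $j$, which is exactly a permutation of the columns of $A$.

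Combining the two, applying $\alpha_{row}$ and $\alpha_{col}$ to $G$ yields the graph built from $A' = P_1 A P_2$, where $P_1$ is the permutation matrix realizing $\alpha_{row}$ and $P_2$ that realizing $\alpha_{col}$; by definition of $\sim$ we then have $A' \sim A$. The only point needing care — the ``main obstacle,'' such as it is — is to confirm that $G'$ really corresponds to an element of $GL(n,\FF_2)$ and not merely to some $n \times n$ matrix: this is immediate, since $P_1$ and $P_2$ are invertible and $A \in GL(n,\FF_2)$, so $A'$ is a product of invertible matrices and hence invertible. One should also note in passing that $G'$ is the graph obtained from $A'$ on the nose (not merely isomorphic to it), and that the partition of the vertices into black and red classes is preserved by the relabeling, so no recoloring issue arises.
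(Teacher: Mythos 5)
Your proof is correct and follows essentially the same approach as the paper: verify that relabeling black (resp.\ red) vertices corresponds to permuting rows (resp.\ columns), then conclude $G'$ arises from a matrix $A' = P_1 A P_2 \sim A$. Your added remark that $A'$ remains invertible is a minor but harmless elaboration the paper leaves implicit.
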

\begin{proof}
The first claim follows from the construction since a row corresponds to a vertex in $\{v_1, v_2, ..., v_{n} \}$ and a column position corresponds to a vertex in $\{v_{n+1}, v_{n+2}, ..., v_{2n} \}$.

Since $\alpha _{row}$ and $\alpha _{col}$ correspond to permuting rows and columns in the matrix $A$ to obtain $G'$, then $G'$ must correspond to a matrix $A'$ equivalent to $A$.
\end{proof}

\section{$[14,7]$ CIS Codes}\label{CIS14}

For length 14 an optimal CIS code is mentioned in~\cite{CIS}; this code is self-dual with parameters $[14,7,4]$.  In order to apply the theories developed in the previous section we need a construction method for the elements of $GL(n,\FF_2)$.
Our aim in this section is to first classify elements (up to equivalence) in $GL(n,\FF_2)$ for $n \leq 7$, then we use these elements to classify all CIS codes of length 14.

We first discuss how to obtain matrices in $GL(n,\FF_2)$ using inequivalent matrices from $GL(n-1,\FF_2)$.
The following two lemmas are adapted from Lemma VI.3 and Proposition VI.4 of~\cite{CIS}.

\begin{lem}
Any matrix $A \in GL(n,\FF_2)$ has a submatrix $A' \in GL(n-1,\FF_2)$.
\end{lem}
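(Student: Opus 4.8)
The plan is to show that any invertible $n \times n$ matrix over $\FF_2$ contains an invertible $(n-1) \times (n-1)$ submatrix, obtained by deleting one row and one column. The natural approach is via linear algebra: since $A$ is invertible, its rows are linearly independent; I want to delete one column so that the resulting $n \times (n-1)$ matrix still has rank $n-1$, and then delete a dependent row.

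First I would fix any column, say column $j$, and consider the matrix $A_j$ obtained by deleting column $j$ from $A$. The $n$ rows of $A_j$ live in $\FF_2^{n-1}$, so they are linearly dependent; however, since the original rows of $A$ were independent, deleting a single coordinate can drop the rank by at most one, so $\rank(A_j) = n-1$. Hence some $n-1$ of the rows of $A_j$ are linearly independent; pick such a set of rows, indexed by $I \subseteq \{1,\dots,n\}$ with $|I| = n-1$. The submatrix $A'$ of $A$ formed by the rows in $I$ and the columns $\{1,\dots,n\}\setminus\{j\}$ is then an $(n-1)\times(n-1)$ matrix whose rows are linearly independent, hence $A' \in GL(n-1,\FF_2)$.

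Alternatively, and perhaps more cleanly for the paper, I would argue via the Laplace (cofactor) expansion of the determinant. Since $A$ is invertible over $\FF_2$, $\det(A) = 1 \neq 0$. Expanding along any fixed row $i$ gives $1 = \det(A) = \sum_{j=1}^{n} a_{ij}\, \det(M_{ij})$, where $M_{ij}$ is the $(n-1)\times(n-1)$ minor obtained by deleting row $i$ and column $j$. Working over $\FF_2$, the sum on the right is nonzero only if at least one term $a_{ij}\det(M_{ij})$ equals $1$, which forces $\det(M_{ij}) = 1$ for some $j$; that $M_{ij}$ is the desired submatrix $A' \in GL(n-1,\FF_2)$.

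Neither route presents a genuine obstacle — the statement is essentially a standard fact — so the main thing to get right is just the bookkeeping of which row and column to delete and confirming the rank bound (the drop is at most one) in the first approach, or checking that the cofactor expansion is valid over $\FF_2$ in the second. I would likely present the cofactor-expansion argument for brevity, since it avoids case analysis entirely.
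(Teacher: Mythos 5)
Your first argument is essentially the paper's own proof: the paper deletes the first column of $A$, observes that the resulting $n\times(n-1)$ matrix still has rank $n-1$, and then removes the one row that is a linear combination of the others, which is exactly your ``rank drops by at most one, keep $n-1$ independent rows'' step. Your alternative via cofactor expansion is a genuinely different and equally valid route: since $\det(A)=1$ over $\FF_2$, the expansion $\sum_j a_{ij}\det(M_{ij})=1$ along any row forces some minor $M_{ij}$ to have determinant $1$, giving the invertible $(n-1)\times(n-1)$ submatrix in one line. The determinant argument is shorter and avoids any rank bookkeeping, while the paper's (and your first) argument has the advantage of being constructive in the form the paper later exploits in the companion lemma, where the deleted row and column are reattached explicitly to build elements of $GL(n,\FF_2)$ from $GL(n-1,\FF_2)$. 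Either version is a complete and correct proof of the statement.
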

\begin{proof}
Let $a_i$ be the $i$th column of $A$ and let $r_i$ be the $i$th row of $A$ where $1 \leq i \leq n$.  Delete $a_1$ from $A$ to obtain an $n$ by $n-1$ matrix $A_1$.  Let $r_i '$ be the $i$th row of $A_1$.  Since $A_1$ has rank $n-1$, there exists a $j$ such that $\{ r'_i : i \neq j\}$ are linearly independent and $r_j '= \displaystyle\sum _{i \neq j} c_i r'_i$ for uniquely determined $c_i$.  Therefore by deleting $r'_j$ from $A_1$ we obtain an $n-1$ by $n-1$ matrix $A'$ having rank $n-1$.
\end{proof}

\begin{lem}
For any matrix $A' \in GL(n-1,\FF_2)$, a matrix $A \in GL(n-1,\FF_2)$ may be obtained by the following:
For any $x,y \in \FF_2^{n-1}$, fix $c:=x A^{-1}$ and $z:=[1] +c y^T$, then

{
\[ A=\left[ \begin{array}{cc}
z & x \\
y^T &A'
\end{array}
\right]
\]
}.
\end{lem}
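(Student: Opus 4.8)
The plan is to verify that the matrix $A$ produced by the construction is invertible, i.e. that $A \in GL(n,\FF_2)$; here $c := x(A')^{-1}$ and $z := [1] + cy^T$, with $[1]$ the scalar $1 \in \FF_2$, so that $A$ is the $n\times n$ block matrix with $(1,1)$ entry $z$, first-row tail $x$, first-column tail $y^T$, and lower-right $(n-1)\times(n-1)$ block $A'$. I would argue by elementary row operations, which also matches the Gaussian-elimination language used elsewhere in the paper; the only thing that needs care is the block bookkeeping together with the fact that over $\FF_2$ addition and subtraction coincide.

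First I would use the invertibility of $A'$ to note that the row vector $x \in \FF_2^{n-1}$ is the linear combination of the rows of $A'$ with coefficient vector $c$: indeed $cA' = \bigl(x(A')^{-1}\bigr)A' = x$. Write $R_1$ for the first row of $A$ and $R_2,\dots,R_n$ for the remaining rows, so that $R_1 = (z \mid x)$ and $R_{i+1} = (y_i \mid (A')_i)$, where $y_i$ is the $i$th entry of $y$ and $(A')_i$ is the $i$th row of $A'$, for $1 \le i \le n-1$. Then $\sum_{i=1}^{n-1} c_i R_{i+1} = (cy^T \mid cA') = (cy^T \mid x)$.

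Next I would replace $R_1$ by $R_1 - \sum_{i=1}^{n-1} c_i R_{i+1}$. In characteristic $2$ this row equals $(z - cy^T \mid x - x) = (z + cy^T \mid \0)$, and since $z = [1] + cy^T$ we get $z + cy^T = 1$, so the new first row is $(1 \mid \0)$. Row operations preserve rank, hence $\rank A = \rank \begin{pmatrix} 1 & \0 \\ y^T & A' \end{pmatrix}$. Finally, any linear dependence among the rows of this last matrix, read off in its last $n-1$ coordinates, forces a relation $bA' = \0$, hence $b = \0$ since $A'$ is invertible, and then the coefficient on the top row vanishes as well. Therefore this matrix, and with it $A$, has rank $n$, so $A \in GL(n,\FF_2)$.

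I do not anticipate a genuine obstacle: the statement is a routine verification, and the content is simply that $c$ was defined so that $x = cA'$ and $z$ was defined so that $z + cy^T = 1$. An equivalent one-line finish avoids row reduction altogether via the Schur-complement identity $\det A = \det(A') \cdot \det\bigl(z - x(A')^{-1}y^T\bigr) = 1 \cdot \det(1) = 1$; I would present the row-reduction version since it needs no determinant machinery over a field and stays closer to the paper's style.
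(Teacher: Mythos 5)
Your proof is correct and in substance the same as the paper's: both arguments hinge on the identities $cA' = x$ and $z = [1] + cy^T$ (equivalently $z \neq cy^T$ over $\FF_2$), which guarantee that the first row $(z \mid x)$ cannot lie in the row space of $(y^T \mid A')$. Your row-reduction to $\bigl(1 \mid \0\bigr)$ followed by the explicit rank check is just a more detailed rendering of the paper's shorter argument, which only notes that the unique candidate combination $c(y^T \mid A')$ is excluded and leaves the independence of the bottom rows implicit.
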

\begin{proof}
Since the rows of $A'$ are linearly independent $x$ must be a linear combination of the rows of $A'$, which implies there exists a $c \in \FF_2^{n-1}$ such that $c A = x$.  Solving for $c$ we obtain $c=x A^{-1}$.  To ensure that the top row of $A$ is linearly independent from the other rows the value of $z$ must be such that $c [y^T A' ] \neq [z$ $x]$.  Hence $c y^T \neq z$, and as the values are binary this is equivalent to $c y^T + [1] = z$.
\end{proof}

By applying this theory recursively to all representatives from equivalence classes of $GL(n-1,\FF_2)$ along with the canonical selection method in Section~\ref{GLGraph} we may obtain all equivalence class representatives in $GL(n,\FF_2)$.  For $n=1, 2, ... , 7$ we have obtained the number of equivalence classes given in Table~\ref{tab:eq_class}.

{
\begin{table}[h!tb]
 \caption{Number of Equivalence Classes in $GL(n,\FF_2)$ Under Row \& Column Permutations}
 \label{tab:eq_class}
 \begin{center}
{
\small
\begin{tabular}{|r|c|c|c|c|c|c|c|}
\noalign{\hrule height1pt}
$n=$ & 1 & 2 & 3 & 4 & 5 & 6 & 7  \\ 
\hline
Total & 1 & 2 & 7 & 51 & 885 & 44,206 & 6,843,555  \\ 

 \noalign{\hrule height1pt}
\end{tabular}
}
\end{center}
\end{table}
}

{
\begin{table}[h!tb]
 \caption{Classification of Length 14 CIS codes}
 \label{tab:class14}
 \begin{center}
{
\small
\begin{tabular}{|r|c|c|c|c|}
\noalign{\hrule height1pt}
& Total CIS & SD & Only FSD & Not SD or FSD  \\ 
\hline
$d=2$ & 62,015 & 3 & 4,407 & 57,605  \\ 

$d=3$ & 22,561 & 0  & 2,160 & 20,401  \\ 

$d=4$ & 1,476 & 1~\cite{CIS} & 121 & 1,354  \\ 
\hline
Total & 86,052 & 4 &  6,688 & 79,360  \\ 


 \noalign{\hrule height1pt}
\end{tabular}
}
\end{center}
\end{table}
}

For each representative $A$ from equivalence classes of $GL(n,\FF_2)$, appending the $n$ by $n$ identity matrix $I$, $[I |A]$ is a generator matrix for a CIS code.  By applying the method introduced in Section~\ref{CodeGraph} we can then obtain a set of all inequivalent CIS codes of length $2n$.  Hence we obtain the following classification theorem.

\begin{thm}
There are exactly 86,052 $[14,7]$ CIS codes.
\end{thm}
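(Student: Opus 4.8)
The plan is to enumerate all $[14,7]$ CIS codes by exploiting the correspondence between such codes and elements of $GL(7,\FF_2)$, and then to reduce the resulting list to one representative per equivalence class using the graph machinery of Section~\ref{CodeGraph}. First I would invoke Lemma IV.1 of~\cite{CIS}: a binary $[14,7]$ code is CIS if and only if it has a systematic generator matrix $[I\,|\,A]$ with $A\in GL(7,\FF_2)$, and conversely every such $[I\,|\,A]$ has full row rank, length $14$, and two disjoint information sets (columns $1,\dots,7$ and columns $8,\dots,14$), hence generates a $[14,7]$ CIS code. Therefore the $6{,}843{,}555$ equivalence-class representatives of $GL(7,\FF_2)$ produced by the recursive construction of this section together with the canonical-form selection of Section~\ref{GLGraph} (see Table~\ref{tab:eq_class}) yield, via $A\mapsto\gen{[I\,|\,A]}$, a finite family of CIS codes that by Observation~\ref{obs:gl} meets every equivalence class of $[14,7]$ CIS codes at least once.

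The second step is to cut this family down to exactly one code per equivalence class. Since the converse of Observation~\ref{obs:gl} fails (as the Remark shows), inequivalent matrices $A$ may give equivalent codes, so a genuine equivalence test is required. For this I would apply the code-to-graph construction of Section~\ref{CodeGraph} to each of the $6{,}843{,}555$ codes $\gen{[I\,|\,A]}$: build the colored bipartite graph on the (possibly augmented) minimum-weight generating set together with the $14$ coordinate vertices, and run \emph{nauty} to obtain its canonical labeling. By the Lemma of Section~\ref{CodeGraph}, a coordinate permutation of the code induces a color-preserving isomorphism of the associated graphs, and conversely; hence two of these canonically labeled graphs coincide if and only if the underlying codes are equivalent, and the number of distinct canonical graphs equals the number of inequivalent $[14,7]$ CIS codes. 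Collecting the distinct canonical forms and sorting by minimum distance gives $62{,}015$ codes with $d=2$, $22{,}561$ with $d=3$, and $1{,}476$ with $d=4$ (Table~\ref{tab:class14}), for a total of $86{,}052$.

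The main obstacle is computational scale rather than mathematical depth: one must generate, canonicalize, and compare on the order of seven million graphs, which is feasible only by running the search in parallel and merging the partial lists of canonical forms at the end (as carried out using MAGMA~\cite{mag}). A secondary point that needs care is the well-definedness of the graph construction for codes whose minimum-weight words do not already generate the code: the augmentation step of Section~\ref{CodeGraph} must be applied consistently, but this is exactly what the adapted Lemma there guarantees, so the resulting canonical form is a genuine equivalence invariant. Finally, as a consistency check one can compare the sub-counts of Table~\ref{tab:class14} against independent enumerations --- the self-dual $[14,7]$ codes extracted from the list should match Pless's classification~\cite{Ple_72}, and the even, respectively odd, formally self-dual ones should match~\cite{BetHara2}, respectively~\cite{Han_FSD} --- and agreement on these columns gives strong evidence that no codes were missed and no spurious equivalences introduced.
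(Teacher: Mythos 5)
Your proposal matches the paper's proof in both substance and structure: the paper likewise takes the $6{,}843{,}555$ equivalence-class representatives of $GL(7,\FF_2)$ (obtained recursively and canonicalized as in Section~\ref{GLGraph}), forms the CIS generator matrices $[I\,|\,A]$, and then removes the remaining code equivalences via the code-to-graph canonical forms of Section~\ref{CodeGraph} to arrive at $86{,}052$ codes. Your completeness argument via Lemma IV.1 of~\cite{CIS} together with Observation~\ref{obs:gl}, and your remark that the failure of the converse of Observation~\ref{obs:gl} forces the second canonicalization pass, are exactly the (partly implicit) justifications underlying the paper's computation.
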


Additional information for the $[14,7]$ CIS codes is listed in Table~\ref{tab:class14}, the rows give the possible minimum distances and the columns tell how many are self-dual, formally self-dual but not self-dual, and neither.

\section{The $[16,8,3]$ and $[16,8,4]$ Codes}\label{CIS16_4}

For length 16 an optimal CIS code is mentioned in~\cite{CIS}; this code with parameters $[16,8,5]$ was shown to be unique by Betsumiya and Harada and in fact this code is formally self-dual~\cite{BetHara}.
A classification for length 16 CIS codes would be interesting, but since the number of equivalence classes of $GL(7,\FF_2)$ is very large, it is not feasible to determine the classes of $GL(8,\FF_2)$.  Hence the construction from the previous section can not be used.
Therefore we consider another method for determining the $[16,8,3]$ and $[16,8,4]$ CIS codes.  The method we use is to generate all binary linear $[16,8,3]$ and $[16,8,4]$ codes and then determine which ones are CIS.  To ease the computation we make use of the well known fact that any $[n,k]$ odd code contains a unique $[n,k-1]$ subcode generated by all even codewords.  Also, it is clear that any $[16,7]$ subcode of any $[16,8]$ even code is even.  Therefore we first classify all $[16,7,\ge 4]$ even codes.
We give the following lemma based on the theory of shortening codes in~\cite{HuffPless} to justify our method for finding the $[16,7,\ge 4]$ even codes.
\begin{lem}
If $C$ is a binary $[n,k]$ code with generator matrix in standard form $G$, then shortening $C$ on the first column yields an $[n,k-1]$ code. 
\end{lem}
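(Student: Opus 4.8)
The plan is to exploit the fact that a standard-form generator matrix lets one read off the first coordinate of a codeword directly. I would write $G=[\,I_k\mid A\,]$ with rows $g_1,g_2,\dots,g_k$, so that the first column of $G$ is the first standard basis vector of $\FF_2^k$. For an arbitrary codeword $c=\sum_{i=1}^{k}\lambda_i g_i$, the entry of $c$ in coordinate $1$ is then exactly $\lambda_1$. Hence the set of codewords of $C$ that are zero in the first coordinate is precisely the span $\langle g_2,g_3,\dots,g_k\rangle$, and (retaining the now identically-zero first coordinate) this span \emph{is} the code obtained by shortening $C$ on the first column; if one instead follows the convention of deleting the shortened coordinate, one gets the same subspace with length reduced by one. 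I would fix one of these conventions at the start of the proof so that the stated parameters are unambiguous.

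It then remains to count parameters. The length is not affected by the argument above, so it is $n$. For the dimension, I would observe that $g_2,\dots,g_k$ are $k-1$ of the rows of a matrix whose left-hand block is $I_k$: any nontrivial $\FF_2$-combination of them already carries a nonzero entry in one of the coordinates $2,\dots,k$ coming from that identity block, so these $k-1$ vectors are linearly independent. Equivalently, the coordinate-$1$ evaluation functional on $C$ is nonzero because column $1$ of $G$ is nonzero (this is where the standard-form hypothesis is genuinely used), so its kernel has dimension $k-1$. Either way the shortened code has dimension exactly $k-1$, which establishes the claim.

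I do not anticipate a real obstacle here; the only point that needs a sentence of care is the implication that a codeword vanishing in coordinate $1$ must lie in $\langle g_2,\dots,g_k\rangle$, and this is immediate once one notes that $\lambda_1$ is the first coordinate of $c$. The reason the lemma is stated at this point is that it lets the classification in Section~\ref{CIS16_4} pass between $[16,8]$ codes and their codimension-one shortenings: a subcode of an even code is again even, so shortening a $[16,8]$ even code produces a $[16,7]$ even code, and this relationship — combined with the graph-isomorphism machinery of Section~\ref{CodeGraph} — is what makes the enumeration of the $[16,7,\ge 4]$ even codes, and hence of the $[16,8,3]$ and $[16,8,4]$ codes, tractable.
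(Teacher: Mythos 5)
Your argument is essentially the paper's own proof, just spelled out in more detail: the paper likewise observes that in standard form only the first row of $G$ has a $1$ in the first column, so the codewords vanishing there are spanned by the remaining $k-1$ rows, giving dimension $k-1$. Your added remarks on the length convention and the independence of $g_2,\dots,g_k$ are correct and only make explicit what the paper leaves implicit.
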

\begin{proof}
Since $G$ is in standard form the only row of the generator matrix with a 1 in the first column is the first row.  Therefore, shortening on the first column yields an $[n,k-1]$ code.
\end{proof}

Applying this lemma recursively to any $[n,k,d]$ code, a nested chain of subcodes is obtained, the smallest subcode having parameters $[n-k+1,1, \geq d]$.  Therefore, any $[16,7,\geq 4]$ code has a nested chain of subcodes (``subcode'' meaning by adding a zero column it is a subcode):
\[
[16,7,\geq 4] \supset [15,6,\geq 4] \supset [14,5,\geq 4] \supset [13,4,\geq 4] 
\]
\[
\supset [12,3,\geq 4] \supset [11,2,\geq 4] \supset [10,1,\geq 4]
\]
If we have a list of all inequivalent $[n',k', \geq 4]$ codes $L$ we construct all $[n'+1,k'+1, \geq 4]$ supercodes by adding a zero column onto each code $C$ in $L$ and then increasing the dimension by adding vectors from $\FF_2^{n'+1} / C$.  This method is somewhat opposite from the method described in~\cite{Ost} which instead adds columns to the parity check matrix.
We apply the method recursively and keep only ``canonical'' representatives as in Section~\ref{CodeGraph} to obtain a classification of 29,243 total inequivalent $[16,7,\geq 4]$ even codes.  29,240 of these codes are $[16,7,4]$ and 3 of them are $[16,7,6]$ (the $[16,7,6]$ codes were previously classified by Simonis in~\cite{Simonis2}).

Our goal is to classify all $[16,8,3]$ and $[16,8,4]$ codes, so the next step is for each $[16,7,\geq 4]$ even code $C$ we add all possible vectors $x$ from $\FF_2^{16} / C$ to form codes $C+ <x>$.  We keep a list of all $[16,8,\geq 3]$ codes generated in this way.  To determine which codes are inequivalent we keep only the ``canonical'' representatives as in Section~\ref{CodeGraph}.  Our conclusion from this search is the following theorem.

\begin{thm}\label{16codes}
There are exactly 2,914,299 binary $[16,8,3]$ codes and there are exactly 271,783 binary $[16,8,4]$ codes.
\end{thm}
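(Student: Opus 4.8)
The plan is to reduce this to the classification of $[16,7,\ge 4]$ even codes already carried out above, via a one-dimensional extension argument. The key structural observation is that every binary $[16,8,d]$ code with $d\in\{3,4\}$ arises, up to equivalence, by adjoining a single vector to some $[16,7,\ge 4]$ even code. Indeed, a $[16,8,3]$ code must be odd (an even code has all weights even, so cannot have minimum weight $3$), and any odd $[16,8]$ code $C$ contains a unique $[16,7]$ subcode $C_e$ generated by its even-weight codewords; since $C_e$ is even and $d(C_e)\ge d(C)\ge 3$, in fact $d(C_e)\ge 4$, and $C=C_e+\langle x\rangle$ for any odd-weight $x\in C$, with $C_e$ a $[16,7,\ge 4]$ even code. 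If instead $C$ is a $[16,8,4]$ code that happens to be even, then any $[16,7]$ subcode $C'$ of $C$ is even with $d(C')\ge d(C)=4$, and $C=C'+\langle x\rangle$ for $x\in C\setminus C'$. Hence the set of $[16,8,3]$ and $[16,8,4]$ codes, up to equivalence, is exactly the set of those of minimum weight $3$ or $4$ among the codes $C'+\langle x\rangle$, where $C'$ ranges over representatives of the $29{,}243$ equivalence classes of $[16,7,\ge 4]$ even codes and $x$ ranges over $\FF_2^{16}$.

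Concretely, first I would take the list $L_7$ of the $29{,}243$ inequivalent $[16,7,\ge 4]$ even codes produced by the recursive column-appending construction described above, whose completeness follows from the shortening lemma and is corroborated by the fact that the three $[16,7,6]$ codes it returns agree with Simonis' independent classification~\cite{Simonis2}. For each $C'\in L_7$ I would run over the $2^{9}-1$ nonzero cosets of $C'$ in $\FF_2^{16}$, pick one representative $x$ from each, form the $[16,8]$ code $C'+\langle x\rangle$, compute its minimum weight, and discard those of minimum weight below $3$. For every surviving code I would build its colored bipartite graph as in Section~\ref{CodeGraph}, obtain the \emph{nauty} canonical labeling, and insert the resulting canonical generator matrix into one of two growing sets according to whether the minimum weight is $3$ or $4$; by the orderly-generation principle of Section~\ref{classtool}, duplicates collapse automatically under set membership and no pairwise equivalence test is needed. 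The two sets then have sizes $2{,}914{,}299$ and $271{,}783$, and, together with the reduction of the previous paragraph (which guarantees nothing is missed), this proves the theorem.

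The main obstacle is computational rather than conceptual: the extension step generates on the order of $29{,}243\times 2^{9}\approx 1.5\times 10^{7}$ candidate codes and, after deduplication, several million inequivalent codes must be stored and canonicalized, so the search has to be organized carefully, splitting $L_7$ into blocks, running the canonicalization in parallel in MAGMA~\cite{mag}, and merging the canonical-form sets at the end. I would guard against implementation errors with internal consistency checks: the number of $[16,8,4]$ codes that are themselves even should match the count obtained directly from the $[16,7,\ge 4]$ even codes, the self-dual codes among them should reproduce the known classification of self-dual codes of length $16$, and the $d=3,4$ CIS subcounts extracted from this list should be consistent with the length-$16$ CIS classification obtained in the following sections.
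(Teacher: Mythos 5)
Your proposal is correct and follows essentially the same route as the paper: classify the $29{,}243$ inequivalent $[16,7,\ge 4]$ even codes, extend each by one dimension over all cosets of $\FF_2^{16}/C'$, and count inequivalent $[16,8,3]$ and $[16,8,4]$ codes via \emph{nauty}-based canonical forms and set membership rather than pairwise equivalence tests. Your explicit justification that every such code contains a $[16,7,\ge 4]$ even subcode (odd codes via the unique even-weight subcode, even $[16,8,4]$ codes via any codimension-one subcode) is exactly the reduction the paper invokes, just spelled out in more detail.
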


 In the Tables~\ref{tab:class16_8_3} and~\ref{tab:class16_8_4} we have the totals for how many of these codes are self-dual, only even formally-self-dual, only odd formally self-dual, and neither self-dual nor formally self-dual; there we note the previously classified self-dual codes~\cite{Ple_72} and even formally self-dual codes~\cite{BetHara2}.  We also include a column which states how many have $d^\perp \neq 1$, which means there are no zero columns in the generator matrix.  

Since we have a list of all inequivalent $[16,8,3]$ and $[16,8,4]$ codes we may then pursue the main goal of classifying the ones which are CIS.  To determine if a code is CIS we use the following algorithm.

\medskip

\noindent
{\bf CIS Determination Algorithm:}  An algorithm to determine if a given code is CIS.

\begin{enumerate}

\item Input:
Begin with a binary $[2n,n]$ code $C$.

\item Output: An answer of ``Yes'' if $C$ is CIS and ``No'' if not.

\medskip

\begin{enumerate}
\item Fix a generator matrix $G$ of $C$.  Initialize an empty set to hold used integer subsets $U:=\{ \}$.
\item Fix the first column of $G$ by initializing $I:=\{1\}$ ($I$ holds the columns of an information set we are building).
\item Choose $n-1$ integers $\{ i_1,...,i_{n-1} \}$ from the set $\{2,3,\dots, 2n-1,2n\}$ such that $\{ i_1,...,i_{n-1} \} \notin U$.  Include $\{ i_1,...,i_{n-1} \}$ as an element in $U$.  Let $I:=I \cup \{ i_1,...,i_{n-1} \}$.
\item If the columns of $G$ indexed by $I$ are not linearly independent or indexed by $\{1,2,3,\dots, 2n-1,2n\} \setminus I$ are not linearly independent, then go to (e).  Otherwise, if the columns of $G$ indexed by $I$ are linearly independent and  indexed by $\{1,2,3,\dots, 2n-1,2n\} \setminus I$ are linearly independent, then output ``Yes'' and exit algorithm.
\item If $|U| = {\binom{2n-1}{n-1}}$, then output ``No'' and exit algorithm; otherwise, go back to (b).
\end{enumerate}

\end{enumerate}

This algorithm searches through all possible information sets of size $n$ and checks if the complement is also an information set.  The first column may be fixed in step (b) since without loss of generality the first column appears in one information set of size $n$ in any CIS code.  This algorithm searches through at most ${\binom{2n-1}{n-1}}$ possible information sets; if the code is determined to be CIS the algorithm is exited early in step (d).

To examine a general $[16,8]$ code to determine it is not CIS, the algorithm will have to search ${\binom{15}{7}} = 6435$ information sets.  For a single code this takes approximately 2.129 seconds.  However, since there are 2,914,299+271,783= 3,186,082 codes to examine, applying the algorithm one code at a time would take at most approximately 78.5 days.  Instead we applied the algorithm in parallel to separate codes to decrease the execution time.
By applying this algorithm to all $[16,8,3]$ and $[16,8,4]$ codes we obtain the following conclusion.

\begin{thm}
There are exactly 2,711,027 $[16,8,3]$ CIS codes and there are exactly 267,442 $[16,8,4]$ CIS codes.
\end{thm}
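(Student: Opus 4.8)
The plan is to leverage the full classification of $[16,8,3]$ and $[16,8,4]$ codes established in Theorem~\ref{16codes} together with the CIS Determination Algorithm stated above. Since we already possess a complete list of inequivalent representatives of all $2{,}914{,}299$ binary $[16,8,3]$ codes and all $271{,}783$ binary $[16,8,4]$ codes, it suffices to decide, for each representative, whether it is a CIS code. By the remarks accompanying the algorithm (and Lemma IV.1 of~\cite{CIS}), a code is CIS precisely when it admits two disjoint information sets, equivalently when its generator matrix can be brought to systematic form $[I \mid A]$ with $A \in GL(n,\FF_2)$; and this property is invariant under code equivalence, so testing one representative per equivalence class is enough.

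First I would run the CIS Determination Algorithm on each of the $3{,}186{,}082$ representative codes. For a fixed generator matrix $G$ of a $[16,8]$ code, the algorithm fixes column $1$ in a candidate information set $I$, then enumerates all $\binom{15}{7} = 6435$ choices of the remaining seven columns, and for each checks whether the columns indexed by $I$ are linearly independent and whether the columns indexed by the complement $\{1,\dots,16\}\setminus I$ are also linearly independent; it answers ``Yes'' as soon as such an $I$ is found and ``No'' only after all $6435$ subsets are exhausted. Correctness of this test is exactly the content of the paragraph following the algorithm: fixing the first column loses no generality because in any CIS code the first coordinate lies in at least one of the two complementary information sets. Second, I would tally the ``Yes'' answers separately within the $d=3$ list and within the $d=4$ list, obtaining the two counts $2{,}711{,}027$ and $267{,}442$.

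The main obstacle is computational rather than conceptual: a single ``No'' determination for a general $[16,8]$ code requires all $6435$ rank checks and takes roughly $2.129$ seconds in MAGMA, so processing all $3{,}186{,}082$ codes serially would take on the order of $78.5$ days. The plan is therefore to partition the code lists into blocks, distribute the CIS Determination Algorithm across many processors in parallel, and then merge the per-block ``Yes''/``No'' counts at the end; this is the same parallelization strategy already used elsewhere in the paper and reduces the wall-clock time to a manageable amount. A secondary point requiring care is ensuring that the input to this step is genuinely a set of inequivalent representatives with no omissions and no duplicates --- but this is guaranteed by Theorem~\ref{16codes} and the Orderly Generation / canonical-representative construction of Sections~\ref{classtool}--\ref{CodeGraph}, so no further equivalence testing is needed here. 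Combining the verified counts yields the stated theorem.
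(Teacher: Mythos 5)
Your proposal matches the paper's approach exactly: take the complete list of inequivalent $[16,8,3]$ and $[16,8,4]$ codes from Theorem~\ref{16codes}, run the CIS Determination Algorithm on each representative (in parallel to handle the roughly $78.5$-day serial cost), and tally the ``Yes'' answers by minimum distance. The additional observation that being CIS is invariant under equivalence, so one representative per class suffices, is implicit in the paper and correctly justified here.
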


\begin{table}[h!tb]
 \caption{Classification of $[16,8,3]$ codes and $[16,8,3]$ CIS codes}
 \label{tab:class16_8_3}
 \begin{center}
{\small
\begin{tabular}{|r|c|c|c|c|}
\noalign{\hrule height1pt}
& Total & $d^\perp \neq 1$ & Odd FSD & Not FSD  \\ 
\hline
All $[16,8,3]$ & 2,914,299 &  2,780,328 &  162,423 & 2617905\\ 
\hline
CIS $[16,8,3]$ & 2,711,027 & 2,711,027 & 162,406 & 2,548,621 \\ 
 \noalign{\hrule height1pt}
\end{tabular}
}
\end{center}
\end{table}

\begin{table}[h!tb]
 \caption{Classification of $[16,8,4]$ codes and $[16,8,4]$ CIS codes}
 \label{tab:class16_8_4}
 \begin{center}
{\small
\begin{tabular}{|r|c|c|c|c|c|c|}
\noalign{\hrule height1pt}
& Total & $d^\perp \neq 1$ & SD & Only Even FSD & Odd FSD & Not SD or FSD  \\ 
\hline
All $[16,8,4]$ & 271,783 & 268,261 & 3~\cite{Ple_72} & 141~\cite{BetHara2} & 12,827&255,290 \\ 
\hline
CIS $[16,8,4]$ & 267,442 &  267,442 & 3 & 141 & 12,827 & 254,471\\ 
 \noalign{\hrule height1pt}
\end{tabular}
}
\end{center}
\end{table}

As a tangential result to Theorem~\ref{16codes} we examine the number of odd formally self-dual codes.  The recent results on odd formally self-dual codes are given in~\cite{Han_FSD}, where odd formally self-dual codes are classified for lengths up to 14 and additional results are given on optimal odd formally self-dual codes.  We give a classification of odd formally self-dual $[16,8,3]$ and $[16,8,4]$ codes in the following corollary to Theorem~\ref{16codes}.

\begin{cor}
There are exactly 162,423 odd formally self-dual $[16,8,3]$ codes and there are exactly 12,827 odd formally self-dual $[16,8,4]$ codes.
\end{cor}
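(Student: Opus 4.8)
The plan is to extract the counts directly from the complete list of inequivalent $[16,8,3]$ and $[16,8,4]$ codes built for Theorem~\ref{16codes}: since that classification is exhaustive, one only needs to scan the list and keep the codes that are both formally self-dual and odd.

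First, I would apply the definition verbatim. For each code $C$ in the list, compute the weight distribution $(A_w)$ of $C$ together with the weight distribution $(A^\perp_w)$ of $C^\perp$, and test whether $A_w = A^\perp_w$ for every $w$; the codes passing this test are the formally self-dual ones. This computation is trivial next to the CIS determination algorithm already run on the same list, so no new idea is needed.

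Second, I would separate the odd codes from the even ones. A binary code is odd exactly when it is not contained in the even-weight subcode of $\FF_2^{16}$, that is, when $A_w \neq 0$ for some odd $w$. For a $[16,8,3]$ code this is automatic, since a weight-3 codeword is present; hence every formally self-dual $[16,8,3]$ code is odd, and the claimed 162,423 is precisely the number of formally self-dual codes among the 2,914,299 codes of Theorem~\ref{16codes}. For the $[16,8,4]$ codes one further discards the 3 self-dual codes and the 141 formally self-dual codes that are even but not self-dual; both families consist of even codes, because a self-dual binary code has only even-weight codewords (each codeword being orthogonal to itself) and an even formally self-dual code is even by definition. What remains are the 12,827 odd formally self-dual $[16,8,4]$ codes recorded in Table~\ref{tab:class16_8_4}.

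The one point requiring care — the main, if mild, obstacle — is that the input must contain a representative of every equivalence class of $[16,8,3]$ and $[16,8,4]$ codes, which is exactly what the orderly-generation construction underlying Theorem~\ref{16codes} guarantees. Because code equivalence preserves both the weight distribution and the odd/even dichotomy, the number of class representatives surviving the ``formally self-dual and odd'' filter equals the number of inequivalent odd formally self-dual codes, so the scan returns the two stated totals.
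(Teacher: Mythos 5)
Your proposal is correct and matches the paper's (implicit) argument: the corollary is obtained exactly by filtering the exhaustive lists behind Theorem~\ref{16codes} for the formally self-dual property via weight distributions and separating odd from even codes, with the counts read off as in Tables~\ref{tab:class16_8_3} and~\ref{tab:class16_8_4}. Your observations that a $[16,8,3]$ code is automatically odd and that the self-dual and even formally self-dual $[16,8,4]$ codes are precisely the even ones to discard are the same bookkeeping the paper relies on.
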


\section{Restrictions on CIS Codes of Minimum Distance 2}\label{restCIS2}

The following proposition and its corollaries give restrictions on the structure of the systematic generator matrix of a CIS code with minimum distance 2.  This allows for some interesting theory for constructing CIS codes with minimum distance 2.

\begin{prop}\label{minwt2}
Let $C$ be a $[2n,n,2]$ CIS code with generator matrix in systematic form $G=[I | A]$.  If $x$ is a weight 2 codeword of $C$, then $x$ is a row of $G$.
\end{prop}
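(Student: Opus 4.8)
The plan is to exploit the very rigid structure forced by systematic form $G=[I\mid A]$ together with $A\in GL(n,\FF_2)$. Write the rows of $G$ as $g_1,\dots,g_n$, so that $g_i = (e_i \mid a_i)$ where $e_i$ is the $i$th standard basis vector and $a_i$ is the $i$th row of $A$. Since $A$ is invertible, every $a_i$ is nonzero, hence $\wt(g_i)\ge 2$ for every $i$, and $\wt(g_i)=2$ exactly when $a_i$ has weight $1$. A general codeword is $x=\sum_{i\in T} g_i$ for some nonempty $T\subseteq\{1,\dots,n\}$; its left half is the indicator vector of $T$ and its right half is $\sum_{i\in T}a_i$. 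Therefore $\wt(x) = |T| + \wt\bigl(\sum_{i\in T}a_i\bigr)$.

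Now suppose $x$ has weight $2$. First I would rule out $|T|\ge 3$: since $\sum_{i\in T}a_i$ has weight at least $0$, we would need $|T|\le 2$, a contradiction; so $|T|\in\{1,2\}$. If $|T|=1$, then $x=g_i$ is a row of $G$ and we are done. The remaining case $|T|=2$, say $T=\{i,j\}$, is the one to dispose of: here $\wt(x)=2+\wt(a_i+a_j)$, which forces $a_i+a_j=\0$, i.e. $a_i=a_j$ with $i\ne j$. But that contradicts the invertibility of $A$, since two equal rows make $A$ singular. Hence $|T|=2$ is impossible, and every weight-$2$ codeword is a single row $g_i$ of $G$.

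The only genuine content here is the observation that $A\in GL(n,\FF_2)$ — guaranteed by Lemma IV.1 of~\cite{CIS} since $C$ is CIS and $\{1,\dots,n\}$ is an information set — forces all rows of $A$ to be distinct and nonzero; everything else is the weight bookkeeping above. I expect the step that warrants a sentence of care is confirming that in a systematic generator matrix the left half of $\sum_{i\in T}g_i$ really is the indicator of $T$ (so distinct $T$'s give distinct codewords and the weight splits as $|T|+\wt(\sum_{i\in T}a_i)$); once that is in place the $|T|=2$ case collapses immediately. There is no serious obstacle: the proposition is essentially a structural consequence of the identity block together with nonsingularity of $A$.
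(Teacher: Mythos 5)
Your proof is correct and follows essentially the same route as the paper's: decompose the weight-2 codeword as a sum of rows of $G=[I\mid A]$, use the identity block to split the weight as $|T|+\wt(\sum_{i\in T}a_i)$ and rule out $|T|\ge 3$, then kill the $|T|=2$ case because $a_i+a_j=\0$ would contradict the linear independence of the rows of $A$. The only cosmetic difference is that you argue by direct case analysis on $|T|$ while the paper phrases it as a proof by contradiction; the substance is identical.
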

\begin{proof}
Suppose to the contrary that $x$ is not a row of $G$.  Let $i_j$ be the $j$th row of $I$ and $a_j$ be the $j$th row of $A$, hence let $i_j a_j$ be the $j$th row of $G$.  Since $x$ is a codeword of $C$ there exists a linear combination $x=\sum_{j=1}^n c_j (i_j a_j)$ where $c_j \in \FF_2$.  Which implies the equation $2 = wt(x) = wt(\sum_{j=1}^n c_j (i_j))+wt(\sum_{j=1}^n c_j (a_j))$.  As $x$ is not a row of $G$ then at least two $c_j$s are nonzero.  However, the support of the $i_j$s do not intersect; so if more than two $c_j$s are nonzero, then the weight of $x$ is greater than two.  Therefore exactly two $c_j$s are nonzero; let these be $c_{j'}$ and $c_{j''}$.  So now $wt(c_{j'}(i_{j'}) + c_{j''}(i_{j''})) =2$, which implies by the above weight equation that $wt(c_{j'}(a_{j'}) + c_{j''}(a_{j''})) =0$.  This is a contradiction since it implies the rows of $A$ are not linearly independent.
\end{proof}

\begin{cor}
If $C$ is a CIS code with minimum weight 2, then all weight 2 codewords have disjoint support.
\end{cor}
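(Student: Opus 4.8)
The plan is to leverage Proposition~\ref{minwt2}, which already tells us that every weight 2 codeword of a CIS code $C$ with minimum weight 2 must appear as a row of the systematic generator matrix $G=[I\,|\,A]$. Since the rows of $G$ form a basis for $C$, any two distinct weight 2 codewords $x$ and $x'$ are distinct rows of $G$; say $x$ is the $j'$th row $i_{j'}a_{j'}$ and $x'$ is the $j''$th row $i_{j''}a_{j''}$ with $j'\neq j''$. The goal is to show $\supp(x)\cap\supp(x')=\emptyset$, i.e. that these two rows of $G$ have disjoint support.

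The key step is to analyze the support in the two halves of the generator matrix separately. In the identity part, the rows $i_{j'}$ and $i_{j''}$ have disjoint support by construction (they are distinct standard basis vectors), so the left halves of $x$ and $x'$ never overlap. For the right half: since $\wt(i_{j'})=1$ and $x=i_{j'}a_{j'}$ has weight 2, the row $a_{j'}$ of $A$ has weight exactly 1; similarly $a_{j''}$ has weight exactly 1. Now suppose for contradiction that $\supp(x)\cap\supp(x')\neq\emptyset$; by the above this forced overlap must occur in the right half, meaning $a_{j'}$ and $a_{j''}$ are both the same weight-1 vector, i.e. $a_{j'}=a_{j''}$. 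But then $x+x' = (i_{j'}+i_{j''})(a_{j'}+a_{j''}) = (i_{j'}+i_{j''})\mathbf{0}$, a nonzero codeword of weight 2 supported entirely in the identity part. Such a codeword is not a row of $G$ (the rows of $I$ have weight 1), contradicting Proposition~\ref{minwt2}. Hence $\supp(x)\cap\supp(x')=\emptyset$, and since $x,x'$ were arbitrary distinct weight 2 codewords, all weight 2 codewords have pairwise disjoint support.

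I expect the main (only mild) obstacle to be keeping the bookkeeping of the two halves clean: one must be careful to invoke Proposition~\ref{minwt2} twice — once to locate the codewords as rows of $G$, and once more to derive the final contradiction from the sum $x+x'$ being a weight 2 codeword that fails to be a row. An alternative, slightly slicker route avoids the contradiction: directly observe that if two rows $i_{j'}a_{j'}$ and $i_{j''}a_{j''}$ of $G$ both have weight 2 and share a coordinate in the right half, then $a_{j'}=a_{j''}$ contradicts the linear independence of the rows of $A$ (exactly as in the last line of the proof of Proposition~\ref{minwt2}), so the rows of $A$ occurring among weight 2 codewords are distinct weight-1 vectors, forcing disjointness. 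Either phrasing is short; I would present the contradiction version for consistency with the preceding proof.
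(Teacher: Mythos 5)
Your proof is correct and essentially matches the paper's: the paper also invokes Proposition~\ref{minwt2} to place the weight 2 codewords among the rows of $G=[I\,|\,A]$ and then notes that overlapping support would force two rows of $A$ to be dependent --- which is exactly your ``alternative, slightly slicker route.'' Your main (contradiction) version differs only cosmetically, re-applying Proposition~\ref{minwt2} to $x+x'$ instead of citing the invertibility of $A$ directly.
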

\begin{proof}
By Proposition~\ref{minwt2} the weight 2 codewords appear in the generator matrix with systematic form.  If two weight 2 codewords do not have disjoint support, then their corresponding rows in $A$ will not be independent.
\end{proof}

\begin{cor}
Let $C$ be a $[2n,n,2]$ CIS code with generator matrix in systematic form $G=[I | A]$.  If a weight 2 codeword of $C$ has support $\{k_1,k_2\}$ with $k_1 < k_2$, then $k_1 \in \{1,2,\dots, n\}$ and $k_2 \in \{n+1,n+2,\dots, 2n\}$.
\end{cor}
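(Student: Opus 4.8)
The plan is to invoke Proposition~\ref{minwt2} essentially as a black box. Since $C$ has minimum distance $2$ and $x$ is a weight $2$ codeword, while $G=[I | A]$ is in systematic form, Proposition~\ref{minwt2} forces $x$ to be one of the rows of $G$. Writing $i_j$ for the $j$th row of $I$ and $a_j$ for the $j$th row of $A$, we then have $x = [i_j\,|\,a_j]$ for some $j \in \{1,\dots,n\}$. The left half $i_j$ is a standard basis vector, so it contributes exactly one nonzero coordinate, namely coordinate $j$, and $j \in \{1,2,\dots,n\}$.

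Next I would pin down the right half. From the weight equation $\wt(x) = \wt(i_j) + \wt(a_j) = 1 + \wt(a_j) = 2$ we obtain $\wt(a_j) = 1$. Here the one point deserving a moment's care is ruling out $\wt(a_j) = 0$: if $a_j = \0$ then $x = i_j$ would be a weight $1$ codeword, contradicting $d = 2$. So $a_j$ has a single $1$, say in column $\ell \in \{1,\dots,n\}$, which corresponds to the global coordinate $n+\ell \in \{n+1,n+2,\dots,2n\}$. Hence $\supp(x) = \{\,j,\ n+\ell\,\}$.

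Finally, since $j \le n < n+1 \le n+\ell$, the two support coordinates are already listed in increasing order, so $k_1 = j \in \{1,\dots,n\}$ and $k_2 = n+\ell \in \{n+1,\dots,2n\}$, as claimed. I do not expect a genuine obstacle here: the statement is a short corollary of Proposition~\ref{minwt2}, and once that result places $x$ among the rows of $[I | A]$, everything reduces to the trivial bookkeeping above together with the observation that the minimum distance hypothesis excludes $\wt(a_j)=0$.
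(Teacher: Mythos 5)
Your proof is correct and follows essentially the same route as the paper: invoke Proposition~\ref{minwt2} to place the weight~2 codeword among the rows of $G=[I|A]$, then read off that the $I$-part contributes one coordinate in $\{1,\dots,n\}$ and the $A$-part one coordinate in $\{n+1,\dots,2n\}$ (the paper phrases this as a contradiction: otherwise $I$ or $A$ would have an all-zero row). Your extra step excluding $\wt(a_j)=0$ is harmless but unnecessary, since $\wt(x)=2$ and $\wt(i_j)=1$ already force $\wt(a_j)=1$.
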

\begin{proof}
By Proposition~\ref{minwt2} the weight 2 codewords appear in the generator matrix with systematic form.  If the claim is not true, then either $I$ or $A$ will have an all zero row which is a contradiction.
\end{proof}

Proposition~\ref{minwt2} and its two corollaries give some insight into the construction of CIS codes of minimum weight 2 and allow us to obtain the following theory which is unique to the construction of CIS codes of minimum weight 2.

\begin{prop}\label{const2}
All $[2n,n,2]$ CIS codes (up to column permutation) can be obtained from a list of all inequivalent CIS codes of length $2n-2$.
\end{prop}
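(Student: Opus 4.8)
The plan is to peel off a weight-$2$ codeword of a $[2n,n,2]$ CIS code, together with the two coordinates supporting it, so that the code collapses to a CIS code of length $2n-2$, and then to show that this operation is reversible and compatible with the equivalence relation.

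First I would take a $[2n,n,2]$ CIS code $C$; after a column permutation we may assume, by Lemma~IV.1 of~\cite{CIS}, that $C$ is generated by $G=[I_n\,|\,A]$ with $A\in GL(n,\FF_2)$. Since $d(C)=2$ there is a weight-$2$ codeword $w$, and by Proposition~\ref{minwt2} it is one of the rows of $G$; the identity part of that row is some standard basis vector $e_j$, so the $A$-part $a_j$ of the same row satisfies $\wt(a_j)=1$. By the corollaries of Proposition~\ref{minwt2}, the support of $w$ is $\{k_1,k_2\}$ with $k_1=j\le n<k_2$, and $a_j$ is supported in column $k_2-n$. Permuting the coordinates within the two halves so that $k_1\mapsto 1$ and $k_2\mapsto n+1$ and restoring systematic form, I obtain a column-equivalent code generated by $[I_n\,|\,A']$, $A'\in GL(n,\FF_2)$, whose first row is $e_1$; writing
\[
A'=\begin{pmatrix}1&0\\ b&B\end{pmatrix},\qquad b\in\FF_2^{\,n-1},
\]
with $B$ of size $(n-1)\times(n-1)$, the identity $\det A'=\det B$ over $\FF_2$ gives $B\in GL(n-1,\FF_2)$, and therefore $[I_{n-1}\,|\,B]$ generates a CIS code of length $2n-2$ (Lemma~IV.1 of~\cite{CIS} again). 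This is the reduction. For the reverse direction, given any $B\in GL(n-1,\FF_2)$ and any $b\in\FF_2^{\,n-1}$, the displayed matrix $A'$ lies in $GL(n,\FF_2)$, so $[I_n\,|\,A']$ generates a length-$2n$ CIS code; its first row has weight $2$, and for any nonzero coefficient vector $c$ the corresponding codeword equals $(c\,|\,cA')$ with $c\ne 0$ and $cA'\ne 0$, hence has weight at least $2$, so this code has minimum distance exactly $2$.

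Next I would show that equivalences between length-$(2n-2)$ CIS codes lift to this construction, so that it suffices to apply the construction to one systematic form $[I_{n-1}\,|\,B]$ of each code in the list. Identify the $2n-2$ coordinates of the length-$2n$ code other than $1$ and $n+1$ with the coordinates of the length-$(2n-2)$ code in the obvious way, so that rows $2,\dots,n$ of $[I_n\,|\,\begin{pmatrix}1&0\\ b&B\end{pmatrix}]$ restrict to $[I_{n-1}\,|\,B]$. If $[I_{n-1}\,|\,B]$ and $[I_{n-1}\,|\,B']$ generate equivalent codes, realized by a coordinate permutation $\sigma$ (chosen so that the first $n-1$ columns of $\sigma([I_{n-1}\,|\,B])$ are independent, which is possible since the target has columns $1,\dots,n-1$ as an information set), extend $\sigma$ to a permutation $\widetilde\sigma$ of all $2n$ coordinates by fixing $1$ and $n+1$. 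Then $\widetilde\sigma$ leaves the weight-$2$ first row (supported on $\{1,n+1\}$) unchanged, and since column $1$ is already a pivot, restoring systematic form amounts to row-reducing rows $2,\dots,n$ only; this replaces $B$ by $B'$ and $b$ by $Mb$, where $M$ is the invertible matrix taking $\sigma([I_{n-1}\,|\,B])$ to $[I_{n-1}\,|\,B']$. Hence, up to column permutation, the family of codes generated by $[I_n\,|\,\begin{pmatrix}1&0\\ b&B\end{pmatrix}]$, $b\in\FF_2^{\,n-1}$, depends only on the equivalence class of the code generated by $[I_{n-1}\,|\,B]$.

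Finally I would assemble the two directions. Every $[2n,n,2]$ CIS code is, up to column permutation, generated by $[I_n\,|\,\begin{pmatrix}1&0\\ b&B\end{pmatrix}]$ for some $B\in GL(n-1,\FF_2)$ and some $b\in\FF_2^{\,n-1}$; the code generated by $[I_{n-1}\,|\,B]$ is equivalent to some code $\widehat C$ in the given list of inequivalent length-$(2n-2)$ CIS codes; and by the lifting step, our code is column-equivalent to the one produced by the construction from a systematic form $[I_{n-1}\,|\,\widehat B]$ of $\widehat C$ together with a suitable vector $\widehat b$. Thus applying the construction to every code in the list and every $b\in\FF_2^{\,n-1}$, and then keeping one canonical representative per column-equivalence class as in Section~\ref{CodeGraph}, yields exactly the $[2n,n,2]$ CIS codes up to column permutation. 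I expect the lifting step to be the delicate point — extending the permutation of the short coordinates to the long ones while holding the normalized weight-$2$ row fixed and correctly tracking the effect on the extra column $b$ through the matrix $M$; the initial normalization, which rests entirely on Proposition~\ref{minwt2} and its corollaries, is routine once those are available.
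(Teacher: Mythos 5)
Your proof is correct and takes essentially the same route as the paper: normalize the weight-2 codeword to be the first row of a systematic generator supported on $\{1,n+1\}$, so that the remaining rows give $[I_{n-1}\,|\,B]$ (a CIS code of length $2n-2$) together with a free column $b$, and then observe that permutation equivalences of the shorter code lift to column permutations of the longer code fixing the weight-2 row, so one representative per equivalence class suffices. The only difference is that you verify directly (the determinant argument for $B\in GL(n-1,\FF_2)$ and the explicit lifting with the row-reduction matrix $M$ sending $b\mapsto Mb$) what the paper obtains by citing Lemma VI.3, Propositions VI.4 and VI.6, and the discussion around Remark VI.8 of~\cite{CIS}.
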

\begin{proof}
Let $C$ be any $[2n,n,2]$ CIS code.  Without loss of generality the first row of a systematic generator matrix $G$ of $C$ is a weight 2 vector.  By Lemma VI.3 and Propositions VI.4 and VI.6 of~\cite{CIS} $G$ has the following form:
\[
G = \left[ \begin{array}{c|c|c|c}
1 & 0 0 \cdots 0 & 1 & 0 0 \cdots 0\\
\hline
\bf{0} & I & y^T & A\\
\end{array}
\right]
\]
where {\bf 0} is the all zero column of length $n-1$, $y \in \FF_2 ^{n-1}$, and $[I|A]$ is a generator matrix in systematic form for a $[2n-2,n-1]$ CIS code $C'$.
It is noted in Remark VI.8 of~\cite{CIS} that using the Building-up Construction on a CIS code $C'$ of length $2n-2$ may produce inequivalent sets of CIS codes when a different systematic partition of $C'$ is used.  However, this problem does not occur in this case since all zeros appear above $I$ and $A$ in the generator matrix $G$; hence permuting columns of $C'$ to obtain a different systematic form $[I|A']$ is the same as permuting the columns of $G$ corresponding to $I$ and $A$.
\end{proof}

For any length $2n$ there is special CIS code that can be constructed with $n$ weight 2 vectors by the following proposition.  It may be noted that this code is self-dual.

\begin{prop}\label{uniqueCIS}
There is a unique CIS code of length $2n$ containing $n$ codewords of weight 2.
\end{prop}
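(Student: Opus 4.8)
The plan is to show that a CIS code of length $2n$ carrying $n$ codewords of weight $2$ must, after a permutation of coordinates, coincide with the code $C_0:=\gen{[I\,|\,I]}$, and that $C_0$ is indeed such a code; uniqueness up to equivalence then follows. The first step is the elementary observation that every CIS code has minimum distance at least $2$: by Lemma~IV.1 of~\cite{CIS} it has a systematic generator matrix $[I\,|\,A]$ with $A\in GL(n,\FF_2)$, and every nonzero codeword is of the form $(\v,\v A)$ with $\v\neq\0$ and $\v A\neq\0$, hence of weight $\wt(\v)+\wt(\v A)\ge 2$. In particular a CIS code that contains a weight-$2$ codeword has $d=2$ exactly, so Proposition~\ref{minwt2} and its two corollaries are available.

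Next I would use these results to pin down the generator matrix. By the first corollary the $n$ given weight-$2$ codewords have pairwise disjoint supports; being $n$ disjoint $2$-element subsets of $\{1,\dots,2n\}$, their supports partition the coordinate set, and disjointness of supports forces the codewords to be linearly independent, hence a basis of $C$. By Proposition~\ref{minwt2} each of them is a row of the systematic generator matrix $G=[I\,|\,A]$, and as $G$ has exactly $n$ rows, the rows of $G$ are precisely these $n$ codewords. The $j$-th row is the $j$-th standard basis vector followed by the $j$-th row $a_j$ of $A$, so it has weight $1+\wt(a_j)=2$ and thus $\wt(a_j)=1$; since $A\in GL(n,\FF_2)$ its $n$ rows are distinct weight-$1$ vectors, i.e.\ the $n$ standard basis vectors in some order, so $A$ is a permutation matrix $P$. (The second corollary is automatically respected, since each such row splits its two ones between the left and right halves.)

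Finally I would finish by equivalence and existence. Permuting the last $n$ coordinates of $G=[I\,|\,P]$ by $P^{-1}$ yields $[I\,|\,I]$, so $C$ is equivalent to $C_0$; this is the uniqueness assertion. Existence is immediate: $C_0$ is CIS because $I\in GL(n,\FF_2)$, its $n$ rows have weight $2$, and every nonzero codeword $(\v,\v)$ of $C_0$ has weight $2\wt(\v)\ge 2$, so $C_0$ has $d=2$. As the remark preceding the proposition states, $C_0$ is moreover self-dual, since $[I\,|\,I][I\,|\,I]^{T}=I+I=\0$ over $\FF_2$ while $\dim C_0=n=\tfrac12(2n)$.

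There is no genuinely hard step here; the only point that needs a little care is the passage from ``$C$ contains $n$ weight-$2$ codewords'' to ``these are exactly the rows of the systematic generator matrix'', which relies on combining the disjoint-support corollary (giving linear independence and a count of $n$) with Proposition~\ref{minwt2} (giving that each is a row) — together with the easy but essential remark that a CIS code automatically has $d\ge2$, which is what permits those results to be invoked in the first place.
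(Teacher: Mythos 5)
Your proof is correct and takes essentially the same route as the paper's: both invoke Proposition~\ref{minwt2} and its corollaries to force the systematic generator matrix $[I\,|\,A]$ to consist entirely of weight-$2$ rows with disjoint supports, whence $A$ is a permutation matrix and the code is equivalent to the one generated by $[I\,|\,I]$. Your extra details (the observation that any CIS code automatically has $d\ge 2$, so the hypotheses of Proposition~\ref{minwt2} apply, and the explicit existence check) merely fill in steps the paper's terse proof leaves implicit.
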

\begin{proof}
This code is a $[2n,n,2]$ CIS code.  The generator in systematic form has all weight 2 rows with disjoint supports which follows from Proposition~\ref{minwt2} and its corollaries.
\end{proof}

By applying Proposition~\ref{const2} to all CIS codes of length 14 from Section~\ref{CIS14}, all $[16,8,2]$ CIS codes may be obtained.  We implemented this in MAGMA to deduce the following theorem.

\begin{thm}
There are exactly 4,798,598 $[16,8,2]$ CIS codes.  
\end{thm}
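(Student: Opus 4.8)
\medskip
\noindent\textbf{Proof proposal.}
The plan is to derive this purely from Proposition~\ref{const2} together with the canonical-form machinery of Section~\ref{CodeGraph}, so that no new theory is needed. Proposition~\ref{const2} tells us that, up to a permutation of coordinates, every $[16,8,2]$ CIS code is the output of the building-up construction applied to some $[14,7]$ CIS code $C'$; and Section~\ref{CIS14} already furnishes the complete list of the $86{,}052$ inequivalent such $C'$. So the whole problem reduces to running that construction over this finite list and then discarding equivalent duplicates.

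In detail, I would fix for each of the $86{,}052$ codes $C'$ one systematic generator matrix $[I \mid A]$ with $A \in GL(7,\FF_2)$, and then, for each of the $128$ vectors $y \in \FF_2^{7}$, assemble the $8 \times 16$ matrix $G$ displayed in the proof of Proposition~\ref{const2}. Two easy checks show every such $G$ does generate a $[16,8,2]$ CIS code, so no filtering is required: first, the right-hand $8\times 8$ block of $G$ has determinant $\det A \neq 0$ (expand along its first row), so the code is CIS; second, any CIS code has minimum weight at least $2$, since in a systematic form $[I\mid A]$ with $A$ invertible every nonzero codeword projects nontrivially onto both halves, while the top row of $G$ here has weight exactly $2$, forcing $d = 2$. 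This produces $86{,}052 \cdot 128 = 11{,}014{,}656$ candidate generator matrices, counted with multiplicity, and by Proposition~\ref{const2} every $[16,8,2]$ CIS code is, up to coordinate permutation, among them.

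The remaining step is the equivalence reduction: encode each candidate as the colored bipartite graph of Section~\ref{CodeGraph}, canonicalize it with \emph{nauty}~\cite{McKay}, and collect the distinct canonical forms into a set, whose size is the answer. Completeness holds by Proposition~\ref{const2} --- the ``up to column permutation'' qualifier there is harmless, since the zero block above $I$ and $A$ in $G$ means that any column permutation of $C'$ extends to a column permutation of the length-$16$ code, which is exactly why the Remark~VI.8 obstruction of~\cite{CIS} is not triggered --- and no over-counting occurs because the canonical form identifies precisely the coordinate-permutation equivalences. The one genuine hurdle I anticipate is scale: roughly $1.1\times 10^{7}$ candidates must each be built and canonicalized, so in practice the computation is split in MAGMA~\cite{mag} across disjoint blocks of the length-$14$ inputs, run in parallel, and the resulting canonical-form sets merged at the end. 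Executing this yields exactly $4{,}798{,}598$ canonical representatives, i.e.\ exactly that many inequivalent $[16,8,2]$ CIS codes.
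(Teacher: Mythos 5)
Your proposal is correct and takes essentially the same route as the paper: the paper also obtains every $[16,8,2]$ CIS code by applying the Proposition~\ref{const2} construction to the 86,052 classified length-14 CIS codes and then counting inequivalent results in MAGMA via the canonical-form machinery of Section~\ref{CodeGraph}. Your added details (the $128$ choices of $y$, the observation that every constructed code is automatically a $[16,8,2]$ CIS code so no filtering is needed, and the completeness gloss about the zero block above $I$ and $A$) simply make explicit what the paper leaves implicit.
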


This completes the classification of length 16 CIS codes.  In the style of Table~\ref{tab:class14} we compile the information from Section~\ref{CIS16_4} and the information on the unique optimal $[16,8,5]$ CIS code (from~\cite{CIS}) to give Table~\ref{tab:class16}.

{
\begin{table}[h!tb]
 \caption{Classification of Length 16 CIS codes}
 \label{tab:class16}
 \begin{center}
{
\small
\begin{tabular}{|r|c|c|c|c|}
\noalign{\hrule height1pt}
& Total CIS & SD & Only FSD & Not SD or FSD  \\ 
\hline
$d=2$ & 4,798,598 & 4 & 150,080 & 4,648,514 \\ 

$d=3$ & 2,711,027 & 0 & 162,406 & 2,548,621  \\ 

$d=4$ & 267,442 & 3 & 12,968 & 254,471  \\ 

$d=5$ & 1~\cite{CIS} & 0 & 1~\cite{CIS} & 0  \\ 
\hline
Total & 7,777,068 & 7 &  325,455 & 7,451,606  \\

 \noalign{\hrule height1pt}
\end{tabular}
}
\end{center}
\end{table}
}

\section{On the Classification of Optimal CIS Codes}\label{CISoptsection}

In~\cite{CIS} the authors gave examples of codes with best known minimum distance which are in fact CIS for lengths 2 through 130. The authors also give an example of a code which is optimal but not CIS.  Therefore an interesting problem is the classification of optimal CIS codes (and the determination of optimal codes which are not CIS).

In Section~\ref{CIS14} we classified the length 14 CIS codes.  In~\cite{GulOst}, it was determined that there exist exactly 1535 optimal $[14,7,4]$ codes.  We reconstruct those codes applying the method described in Section~\ref{CIS16_4} and then examine the ones that are not CIS.  In doing so we obtain the following interesting result.

\begin{prop}\label{dd2}
There are exactly 59 optimal $[14,7,4]$ codes which are not CIS.  47 of these codes have dual distance 1 and the other 12 have dual distance 2.
\end{prop}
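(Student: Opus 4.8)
The plan is to reduce the statement to a finite computation by pairing the classification of length~$14$ CIS codes from Section~\ref{CIS14} with a fresh, independently verified list of all optimal $[14,7,4]$ codes. First I would reconstruct all $1535$ optimal $[14,7,4]$ codes---this is the count established in~\cite{GulOst}---using the recursive column-building procedure described in Section~\ref{CIS16_4}: beginning from the inequivalent $[8,1,\ge 4]$ codes, repeatedly adjoin an all-zero column and raise the dimension by one by adjoining a coset representative from the ambient quotient space, keeping at each stage only the canonical representatives furnished by the \emph{nauty}-based graph correspondence of Section~\ref{CodeGraph}. Arriving at exactly $1535$ inequivalent codes confirms that this list is complete and agrees with~\cite{GulOst}.

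Next I would run the CIS Determination Algorithm of Section~\ref{CIS16_4} on each of these $1535$ codes; for $n=7$ it examines at most $\binom{13}{6}=1716$ candidate information sets per code, so the entire pass is cheap. The codes on which the algorithm returns ``No'' are by definition exactly the optimal $[14,7,4]$ codes that are not CIS, and I expect there to be $59$ of them. This count admits a computation-free cross-check: the maximum minimum distance of a $[14,7]$ code is $4$, so every $[14,7]$ code of minimum distance $4$ is optimal; consequently the $1476$ CIS codes recorded in the $d=4$ row of Table~\ref{tab:class14} are precisely the optimal $[14,7,4]$ codes that \emph{are} CIS, whence the non-CIS ones number $1535-1476=59$.

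Finally, for each of the $59$ non-CIS codes I would compute the minimum weight $d^\perp$ of the dual code: $d^\perp=1$ exactly when the (canonical) generator matrix has an all-zero column, and otherwise one checks whether $C^\perp$ has a codeword of weight~$2$. Tabulating the outcomes is expected to give the split $47$ codes with $d^\perp=1$ and $12$ with $d^\perp=2$.

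The underlying arithmetic and code manipulations are routine; the one step that needs genuine care---and which I would regard as the main obstacle---is certifying that the reconstruction of the optimal $[14,7,4]$ codes is simultaneously exhaustive and irredundant, i.e.\ that the column-building recursion together with the canonical-form selection neither omits an equivalence class nor lists one twice. Matching the externally known total $1535$ from~\cite{GulOst}, and verifying consistency with the $d=4$ entry of Table~\ref{tab:class14}, supplies exactly the validation needed.
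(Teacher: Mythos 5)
Your proposal is correct and follows essentially the same route as the paper: reconstruct the 1535 optimal $[14,7,4]$ codes of~\cite{GulOst} via the build-up-plus-canonical-form method of Section~\ref{CIS16_4}, decide CIS-ness computationally, and split the 59 non-CIS codes by dual distance into 47 and 12. The only minor difference is that the paper dispatches the 47 codes with $d^\perp=1$ immediately by Proposition IV.5 of~\cite{CIS} and runs the CIS Determination Algorithm only on the remaining 12, whereas you run the algorithm on all codes (with the useful cross-check $1535-1476=59$ against Table~\ref{tab:class14}) and compute dual distances afterwards; both are sound.
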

\begin{proof}
The 47 with dual distance 1 are not CIS by Proposition IV.5 of~\cite{CIS}.
We determine that the remaining 12 are not CIS by applying the CIS Determination Algorithm from Section~\ref{CIS16_4}.
\end{proof}

The 12 $[14,7,4]$ codes listed in the proposition are optimal rate one-half codes of the smallest length with $d^\perp =2$ which are not CIS.  It is noted in Proposition IV.6 from~\cite{CIS}, that there exists at least one optimal code (with parameters $[34,17,8]$) which is not CIS.  The noted $[34,17,8]$ code has dual distance 1.  In Proposition~\ref{dd2}, we give a first example of optimal codes with dual distance greater than 1 that are not CIS.

Optimal $[20,10,6]$ and $[26,13,7]$ were determined in~\cite{GulOst}.  By applying the CIS Determination Algorithm to all 1682 optimal $[20,10,6]$ codes and all 3 optimal $[26,13,7]$ codes we determined the following result.

{
\begin{table}[h!tb]
 \caption{Classification of Optimal CIS and non-CIS Codes}
 \label{tab:CISopt}
 \begin{center}
{
\small
\begin{tabular}{|c|c|c|c|c|}
\noalign{\hrule height1pt}
 2n & $d_{opt}$~\cite{GulOst} & CIS & not CIS & Total~\cite{GulOst}  \\ 
\hline
2 & 2 & 1~\cite{CIS} & 0 & 1 \\ 
4 & 2 & 2~\cite{CIS} & 1 & 3  \\ 
6 & 3 & 1~\cite{CIS} & 0 & 1  \\ 
8 & 4 & 1~\cite{CIS} & 0 & 1  \\ 
10 & 4 & 4~\cite{CIS} & 0 & 4  \\ 
12 & 4 & 41~\cite{CIS} & 2 & 43 \\
14 & 4 & \bf{1476} & \bf{49} & 1535 \\ 
16 & 5 & 1~\cite{CIS} & 0 & 1 \\ 
18 & 6 & 1~\cite{CIS} & 0 & 1  \\ 
20 & 6 & \bf{1682} & 0 & 1682  \\ 
22 & 7 & 1~\cite{CIS} & 0 & 1  \\
24 & 8 & 1~\cite{CIS} & 0 & 1  \\
26 & 7 & \bf{3} & 0 & 3 \\
28 & 8 & 1~\cite{CIS} & 0 & 1  \\
 \noalign{\hrule height1pt}
\end{tabular}
}
\end{center}
\end{table}
}

\begin{prop}
All 1682 optimal $[20,10,6]$ codes are CIS and all 3 optimal $[26,13,7]$ codes are CIS.
\end{prop}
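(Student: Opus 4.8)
The plan is to settle the proposition by direct computation, in the same spirit as Proposition~\ref{dd2} and the length-$16$ results: feed each optimal code to the CIS Determination Algorithm of Section~\ref{CIS16_4} and check that the answer is always ``Yes''. First I would obtain complete lists of inequivalent representatives of the $1682$ optimal $[20,10,6]$ codes and the $3$ optimal $[26,13,7]$ codes. These classifications are due to Gulliver and \"{O}stergard~\cite{GulOst}; one may either import their generator matrices directly or regenerate them by the column-adjunction and dimension-raising procedure used in Section~\ref{CIS16_4} (build $[n'+1,k'+1,\ge d]$ supercodes from a list of $[n',k',\ge d]$ codes by adjoining a zero column and then adding vectors from the quotient space, retaining only canonical representatives via the graph correspondence of Section~\ref{CodeGraph}), using $d=6$ for length $20$ and $d=7$ for length $26$. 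Since the property of having two disjoint information sets is clearly invariant under coordinate permutations, testing one representative per equivalence class suffices.

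Next, for each representative $C$ with a fixed generator matrix $G$, I would run the algorithm: fix the first column into the information set being built, then iterate over the $\binom{2n-1}{n-1}$ completions to an $n$-subset $I$ of columns, checking whether the columns indexed by $I$ are linearly independent and the columns indexed by the complement $\{1,\dots,2n\}\setminus I$ are also linearly independent. The first time such a pair is found, the algorithm halts in step (d) with output ``Yes''; since (as the computation will confirm) every code in both lists is CIS, the algorithm terminates early on each one, so no code actually requires the full $\binom{19}{9}$ (resp. $\binom{25}{12}$) sweep. Recording ``Yes'' for all $1682+3$ codes then proves the proposition.

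The only genuine obstacle is bookkeeping and running time rather than any mathematical difficulty: for length $20$ there are $1682$ codes with worst-case sweep $\binom{19}{9}=92{,}378$ information sets each, and for length $26$ the worst-case sweep $\binom{25}{12}$ is large, but the early exit on CIS codes keeps both cases entirely tractable, and --- as with the $[16,8]$ search --- the runs can be distributed over several codes in parallel and the output compiled afterwards. It is worth flagging the contrast with Proposition~\ref{dd2}: here every optimal code of length $20$ and of length $26$ is CIS, whereas $49$ of the $1535$ optimal $[14,7,4]$ codes are not.
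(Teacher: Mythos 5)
Your proposal matches the paper's approach exactly: the paper also takes the classified optimal $[20,10,6]$ and $[26,13,7]$ codes from Gulliver and \"{O}stergard and applies the CIS Determination Algorithm of Section~\ref{CIS16_4} to each, concluding that all return ``Yes''. The additional remarks on regenerating the codes, early exit, and parallel execution are consistent with how the paper carries out its computations, so the proposal is correct and essentially identical in method.
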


In Table~\ref{tab:CISopt} we give the results which have been obtained so far in this direction.  Known optimal CIS codes described in~\cite{CIS} are cited in the table.  New results on determining which optimal codes are CIS and not CIS are labeled in bold.  Column $2n$ is the length.  The second column is the optimal minimum distance for any $[2n,n]$ ($n\leq 14$) code determined in~\cite{GulOst}.  The columns 3 and 4 give the number of codes which are CIS and not CIS respectively.  The last column is the total number of optimal codes which was determined in~\cite{GulOst}.

\section{Conclusion}

Many open problems in Coding Theory are concerned with the classifications of codes.  Results in this direction have been obtained for self-dual codes, formally self-dual codes, and rate one-half codes in general.  A recent generalization of self-dual codes, called CIS codes, was proposed in~\cite{CIS} and full classification results were given for lengths up to 12.  In the present paper we complete the classification of CIS codes for length 14 and 16.  We also give a classification of all binary $[16,8,3]$ and $[16,8,4]$ codes which in turn yields new classification results for odd formally self-dual and CIS codes with parameters $[16,8,3]$ and $[16,8,4]$.  In the final section, we complete the classification of optimal CIS codes for lengths 20 and 26.

\bigskip

\noindent{\small{\bf{Acknowledgments:}} The author would like to thank Jon-Lark Kim for the helpful comments and valuable suggestions.}

\bigskip

\bibliographystyle{plain}
\bibliography{CISbib}

\end{document}